\newtheorem{theorem}{Theorem}
\newtheorem{proposition}{Proposition}
\theoremstyle{definition}
\theoremstyle{remark}
\newtheorem*{rep@theorem}{\rep@title}
\newcommand{\newreptheorem}[2]{%
\newenvironment{rep#1}[1]{%
 \def\rep@title{#2 \ref{##1}}%
 \begin{rep@theorem}}%
 {\end{rep@theorem}}}
\newcommand{\eqnref}[1]{\eqref{eqn:#1}}
\def\independenT#1#2{\mathrel{\rlap{$#1#2$}\mkern2mu{#1#2}}}
\newcommand\independent{\protect\mathpalette{\protect\independenT}{\perp}}
\newcommand{\ignore}[1]{}
\newcommand{\bigxko}{\widetilde{X}}
\newcommand{\eqd}{\stackrel{\textnormal{d}}{=}}
\author{Rina Foygel Barber\thanks{Department of Statistics, University of Chicago}\quad\quad  
Emmanuel J.~Cand{\`e}s\thanks{Departments of Mathematics and of Statistics, Stanford University} }
\title{On the Construction of Knockoffs in Case-Control Studies}
\date{December, 2018}
\begin{document}

\maketitle

\begin{abstract}

  Consider a case-control study in which we have a random sample,
  constructed in such a way that the proportion of cases in our sample
  is different from that in the general population---for instance,
the sample is constructed to achieve a fixed ratio of cases to
controls. Imagine that we wish to determine which of the potentially
many covariates under study truly influence the response by applying
the new model-X knockoffs approach.  This paper demonstrates that it
suffices to design knockoff variables using data that may have a
different ratio of cases to controls.  For example, the knockoff
variables can be constructed using the distribution of the original
variables under any of the following scenarios: (1) a population of
controls only; (2) a population of cases only; (3) a population of
cases and controls mixed in an arbitrary proportion (irrespective of
the fraction of cases in the sample at hand). The consequence is that
knockoff variables may be constructed using unlabeled data, which is
often available more easily than labeled data, while maintaining
Type-I error guarantees.

\end{abstract}

\section{Conditional Testing}\label{sec:intro}

In many scientific applications, researchers are often interested in
understanding which of the potentially many explanatory variables
truly influence a response variable of interest.  For example,
geneticists seek to understand the causes of a biologically complex
disease using single nucleotide polymorphisms (SNPs) as covariates. A
goal in such studies is to determine whether or not a given genetic
mutation influences the risk of the disease. Moving away from a
specific application, the general statistical problem is this: given
covariates $X_1,\dots,X_p$ and a response variable $Y$ which may be
discrete or continuous, for each variable $X_j$ we would like to know
whether the distribution of the response $Y| X_1, \ldots, X_p$
depends on $X_j$ or not; or equivalently, whether the $j$th variable
has any predictive power or not. Under mild conditions
\cite{DE:2000,candes2016}, this conditional null hypothesis is
equivalent to
\begin{equation}
  \label{eqn:null}
  H_j: \quad Y \independent X_j | X_{-j};
\end{equation}
under $H_j$, $X_j$ is independent of $Y$ once we have information
about all the other features.

It is intuitively clear that the null hypothesis of conditional
independence \eqnref{null} does not depend on the marginal
distribution of $X$. Specifically,~\eqnref{null} can be verified by simply checking that the conditional distribution of $Y| X$ 
depends on $X_{-j}$ and not on $X_j$---and therefore, knowing the conditional distribution of $Y| X$ is sufficient for testing
this property. Somewhat less intuitively, it is also the case that~\eqnref{null} can be verified through the conditional distribution
of $X| Y$, regardless of the marginal distribution of $Y$.
\begin{proposition}
\label{prop:null}
Consider any two distributions $P$ and $Q$ on the pair $(X,Y)$. Then:
\begin{itemize} 
\item Assume $P$ and $Q$ have the same likelihood of the response
    $Y$, i.e.~$P(Y|X) = Q(Y|X)$,\footnote{In this paper, we write
      joint distributions as $P(X,Y)$, marginals as $P(X)$ and $P(Y)$,
      and conditionals as $P(X|Y)$ and $P(Y|X)$.} and that $P(X)$ is
    absolutely continuous with respect to $Q(X)$.\footnote{The
      absolute continuity is here to avoid certain types of trivial
      situations of the following kind: take 
        $X \in \{0,1\}$ with $P(X = 0) = P(X = 1) = 1/2$ whereas
        $Q(X = 0) = 1$ and $Q(X = 1) = 0$. Since $X$ is constant
      under $Q$, $H_j$ holds trivially under $Q$. It may however not
      hold under $P$.} Then if $H_j$ is true under $Q$, it is also
    true under $P$. 
\item Assume $P$ and $Q$ have the same conditional distribution
  of the covariates, i.e.~$P(X|Y)=Q(X|Y)$, and that $P(Y)$ is
  absolutely continuous with respect to $Q(Y)$. Then if $H_j$ is true
  under $Q$, it is also true under $P$. Furthermore, in this case 
  we have
\begin{equation}
\label{eqn:conditional_null}
P(X_j | X_{-j}) = Q(X_j|X_{-j}). 
\end{equation} 
\end{itemize}
\end{proposition}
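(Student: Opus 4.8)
The plan is to rephrase the conditional independence $H_j$ in \eqnref{null} as the statement that a suitable conditional distribution does not depend on one of its arguments, and then to transport that statement from $Q$ to $P$ using the matching-conditional hypotheses together with the absolute-continuity assumptions. Throughout, the role of absolute continuity is precisely to ensure that an ``almost everywhere'' property holding under $Q$ survives as an ``almost everywhere'' property under $P$; this is exactly what rules out the degenerate situation flagged in the footnote.

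For the first claim I would start from the fact that $H_j$ holds under $Q$ if and only if $Q(Y\mid X)=Q(Y\mid X_{-j})$; that is, there is a ($Q_X$-a.e.\ defined) function $g$ with $Q(Y\mid X=x)=g(x_{-j})$ not depending on the $j$th coordinate. Since $P(Y\mid X)=Q(Y\mid X)$ and $P(X)\ll Q(X)$, this identity transfers: the set on which $Q(Y\mid X=x)$ fails to equal $g(x_{-j})$ is $Q_X$-null, hence $P_X$-null, so $P(Y\mid X=x)=g(x_{-j})$ for $P_X$-almost every $x$. It then remains to average out $X_j$: by the tower property under $P$, $P(Y\mid X_{-j})=\mathbb{E}_P[\,P(Y\mid X)\mid X_{-j}\,]=\mathbb{E}_P[\,g(X_{-j})\mid X_{-j}\,]=g(X_{-j})$, so $P(Y\mid X)=g(X_{-j})=P(Y\mid X_{-j})$, which is exactly $H_j$ under $P$.

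For the second claim I would exploit the symmetry of conditional independence, writing $H_j$ as $X_j\independent Y\mid X_{-j}$, equivalently $Q(X_j\mid X_{-j},Y)=Q(X_j\mid X_{-j})$, i.e.\ $Q(X_j\mid X_{-j},Y)$ does not depend on $Y$. The key algebraic step is to express this conditional through the hypothesized common object, $Q(X_j\mid X_{-j},Y)=Q(X\mid Y)/Q(X_{-j}\mid Y)$, and likewise for $P$. Because $P(X\mid Y)=Q(X\mid Y)$ forces $P(X_{-j}\mid Y)=Q(X_{-j}\mid Y)$ (marginalize over $x_j$), I obtain $P(X_j\mid X_{-j},Y)=Q(X_j\mid X_{-j},Y)$. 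To conclude $H_j$ under $P$, I transfer the ``no dependence on $Y$'' property using $P_{X_{-j},Y}\ll Q_{X_{-j},Y}$, which itself follows from $P(X_{-j}\mid Y)=Q(X_{-j}\mid Y)$ together with $P(Y)\ll Q(Y)$. Finally, \eqnref{conditional_null} follows by combining $P(X_j\mid X_{-j},Y)=Q(X_j\mid X_{-j},Y)$ with the fact that, under $H_j$, this common conditional is free of $Y$ under both laws, so that $P(X_j\mid X_{-j})=P(X_j\mid X_{-j},Y)=Q(X_j\mid X_{-j},Y)=Q(X_j\mid X_{-j})$.

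The main obstacle is measure-theoretic bookkeeping rather than any deep idea: conditional distributions are defined only up to null sets, so each ``does not depend on $x_j$'' or ``does not depend on $y$'' statement must be tracked as an almost-everywhere identity, and at every step I must check that the governing null sets are exchanged correctly under the two absolute-continuity hypotheses. A secondary point of care is the ratio representation $Q(X_j\mid X_{-j},Y)=Q(X\mid Y)/Q(X_{-j}\mid Y)$, which is valid where the denominator is positive; off the support of $X_{-j}\mid Y$ the conditional may be chosen arbitrarily and does not affect the argument.
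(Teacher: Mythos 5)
Your argument is correct and follows essentially the same route as the paper: both parts hinge on the identities $P(Y|X)=Q(Y|X)=Q(Y|X_{-j})$ and $P(X_j|Y,X_{-j})=Q(X_j|Y,X_{-j})=Q(X_j|X_{-j})$, with the paper phrasing the first part as a factorization of $P(Y,X_j|X_{-j})$ and you phrasing it as $P(Y|X)$ being a function of $X_{-j}$ alone --- an equivalent characterization. The only difference is that you carry out the almost-everywhere bookkeeping explicitly, whereas the paper works in the discrete case and treats these steps as immediate.
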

\begin{proof}[Proof of Proposition \ref{prop:null}]
  We prove the proposition in the case where all the variables are
  discrete; the case where some of the variables may be continuous is
  proved analogously. The first part of the proposition is nearly a
  tautology. Assume that $H_j$ is true under $Q$, then we
  have\footnote{To emphasize the role of absolute
    continuity, the equality below should be interpreted as holding at
    all points $(x,y)$ in the support of $P(X,Y)$.\label{abs-continuity}}
  \[
P(Y, X_j | X_{-j}) = P(Y|X) P(X_j|X_{-j}) = Q(Y|X) P(X_j| X_{-j}) = Q(Y|X_{-j})  P(X_j| X_{-j}). 
\]
The second inequality comes from the assumption that the likelihoods
are identical, and the third from our assumption that $H_j$ holds
under $Q$. Hence, $Y$ and $X_j$ are conditionally independent under
$P$, and so $H_j$ holds under $P$. 

For the second part, suppose that $H_j$ holds under $Q$. Then\footref{abs-continuity}
\[
P(X_j | Y, X_{-j}) = Q(X_j | Y, X_{-j}) = Q(X_j|X_{-j}),  
\]
where the first step holds because $P$ and $Q$ have the same
conditional distribution of $X| Y$, while the second step uses the assumption that $H_j$ holds
under $Q$, i.e.~$Y \independent X_j \, | \, X_{-j}$ under $Q$. This immediately
implies that $Y \independent X_j \, | \, X_{-j}$ under
$P$, and so $H_j$ holds under $P$. This gives
$P(X_j | Y, X_{-j}) = P(X_j | X_{-j}) = Q(X_j | X_{-j})$.
\end{proof}

\section{Case-Control Studies}\label{sec:case-control}

Prospective and case-control studies in which the response
$Y \in \{0,1\}$ takes on two values
\cite{Prentice1979}---e.g.~indicating whether an individual suffers
from a disease or not---offer well-known examples of distributions satisfying the second condition, where the conditional distribution
of $X| Y$ is the same but the marginal distribution of $Y$ is not.
\begin{description}
\item[Prospective study] In a prospective study, we may be interested
  in a specific population---all adults living in the UK, all males,
  all pregnant women, and so on. 

\item[Retrospective (case-control) study] In a retrospective study, individuals are
  typically recruited from the population based on the value of their
  response $Y$. In a case-control study, for instance, we may recruit
  individuals at random in such a way that the proportion of cases
  and controls achieves a fixed ratio. Typically, cases are are more
  prevalent in a retrospective sample than they are in a prospective
  sample. 
\end{description}

\noindent A prospective distribution $P$ and a retrospective distribution $Q$
have equal conditional distributions of $X| Y$,
\[
P(X | Y) = Q(X|Y).
\]
This is because, conditioning on $Y=1$ (the individual has the disease), both $P$ and $Q$ sample individuals
uniformly at random from the population of all individuals with the disease; the same holds for $Y=0$. That is, conditioned
on the value of $Y$, the two types of studies both sample $X$ from the same distribution. On the other hand,
$P$ and $Q$ will in general have different marginal distributions,
\[
P(X) \neq Q(X) \quad \text{and} \quad P(Y) \neq Q(Y).  
\]
For instance, while the incidence of a disease may be low (say, less
than 0.1\%) in the population, it may be high in the retrospective
sample (say, equal to 50\%). This trivially implies that
$P(Y)\neq Q(Y)$. In general we would also have $P(X)\neq Q(X)$ since,
under $Q$, values of $X$ associated with a high risk of the disease
would be overrepresented relative to $P$. Since $P(X| Y)=Q(X| Y)$,
however, it then follows from the second part of Proposition
\ref{prop:null} that in a case-control study, if conditional
independence holds w.r.t.~the retrospective distribution $Q$, it holds
w.r.t.~the prospective distribution $P$. (This is because the
retrospective distribution $Q$ includes both cases ($Y=1$) and
controls ($Y=0$) with positive probability and, therefore, $P(Y)$ is
absolutely continuous w.r.t.~$Q(Y)$.)

\section{Knockoffs in Case-Control Studies}

We now turn to the main subject of this paper. Model-X knockoffs is a
new framework for testing conditional hypotheses \eqnref{null} in
complex models. While most of the literature relies on a specification
of the model that links together the response and the covariates, the
originality of the knockoffs approach is that it does not make any
assumption about the distribution of $Y | X$. The price to pay for
this generality is that we would need to know the marginal
distribution of the covariates.  Assume we get independent samples
from a distribution $Q(X,Y)$ (as in a retrospective study, for
example). Model-X knockoffs are fake variables
$\tilde X_1, \ldots, \tilde X_p$ obeying the following pairwise
exchangeability property:
\begin{equation}\label{eqn:swap_j_only}
  X \sim Q(X) \quad \implies \quad   \big(X_j,\bigxko_j,X_{-j},\bigxko_{-j}\big) 
  \eqd \big(\bigxko_j,X_j,X_{-j},\bigxko_{-j}\big) \quad \text{for any
  } j \in \mathcal{H}_0.
\end{equation}
Here, $\mathcal{H}_0\subset\{1,\dots,p\}$ is the subset of null
hypotheses that are true, i.e.~covariates $j$ for which
$Y\independent X_j \, | \, X_{-j}$ under $Q$ (and, therefore,
$Y\independent X_j \, | \, X_{-j}$ hold also under any other
distribution with the same conditional). Having
achieved~\eqnref{swap_j_only}, a general selection procedure
effectively using knockoff variables as negative controls can be
invoked to select promising variables while rigorously controlling the
false discovery rate. In other words,~\eqnref{swap_j_only} implies
that a variable selection procedure that is likely to mistakenly
select irrelevant variable $X_j$, is equally likely to select the
constructed knockoff feature $\bigxko_j$, which then alerts us to the
fact that our variable selection procedure is selecting false
positives. We refer the reader to the already extensive literature on
the subject, e.g.~\cite{barber2015,candes2016}, for further
information.

In the literature, we often encounters the claim that this {\em shift
  in the burden of knowledge}---i.e.~knowledge about the distribution
of $X$ versus that of $Y | X$---is appropriate in situations where we
may have ample unlabeled data available to `learn' the distribution of
the covariates $X$. After all, while the geneticist may have observed
only a few instances of a rare disease, she may have at her disposal
several hundreds of thousands of unlabeled genotypes.  This means that
we have very limited access to labeled data, i.e.~pairs $(X,Y)$, where
$Y$ is known and where the sample is balanced to have a non-vanishing
proportion of cases (i.e.~$Y=1$)---this is the retrospective
distribution $Q$.  In contrast, unlabeled data ($X$ only) is easy to
obtain, but will be drawn from the general population, in which $Y=1$
is extremely rare---that is, drawn from the prospective distribution
$P$.  Imagine using this unlabeled data to learn the prospective
distribution $P(X)$, i.e.~the distribution of $X$ in the general
population, and then using this knowledge for variable selection using
our labeled case-control data, i.e.~draws from the retrospective
distribution $Q(X,Y)$.  Using the distribution $P(X)$ learned on the
unlabeled data, we would in principle be able to construct
exchangeable features for $P(X)$, i.e.~knockoff variables
  $\bigxko_1,\dots,\bigxko_p$ constructed to satisfy the
  exchangeability property
\begin{equation}\label{eqn:swap_j_onlyP}
  X \sim P(X) \quad \implies \quad   \big(X_j,\bigxko_j,X_{-j},\bigxko_{-j}\big) 
  \eqd \big(\bigxko_j,X_j,X_{-j},\bigxko_{-j}\big)  \quad \text{for any
  } j \in \mathcal{H}_0. 
\end{equation}
Now contrast \eqnref{swap_j_only} and \eqnref{swap_j_onlyP}:
we want exchangeability w.r.t.~the
retrospective distribution $Q$, but since we have constructed our knockoffs using the unlabeled
data, we have
perhaps only achieved exchangeability w.r.t.~the prospective distribution
$P$. The good news is that this mismatch {\em does not affect the validity of our inference}.
By Proposition~\ref{prop:null}, exchangeability of the null features and their knockoffs
under the prospective distribution implies exchangeability under the
retrospective distribution. A more general statement is this:

\begin{theorem}
  \label{thm:main}
  Consider two distributions $P$ and $Q$ such that
  $P(X_j| X_{-j}) = Q(X_j | X_{-j})$ for every null variable, i.e.~for all $j \in {\cal H}_0$.
  Then any knockoff sampling scheme obeying exchangeability w.r.t.~$P$
  \eqnref{swap_j_onlyP} obeys the same property w.r.t.~$Q$
  \eqnref{swap_j_only}.

  By~\eqnref{conditional_null} of Proposition \ref{prop:null}, this
  conclusion applies to any situation where $P$ and $Q$ have the same
  conditionals, i.e.~$P(X|Y) = Q(X|Y)$ (with the proviso that $P(Y)$
  is absolutely continuous w.r.t.~$Q(Y)$). In particular, it applies to 
  case-control studies in which $Q$ is a
    retrospective distribution and $P$ is either a population of
    controls only, or a population of cases only, or a population of
    cases and controls mixed in an arbitrary proportion (irrespective
    of the fraction of cases  in the sample drawn from $Q$).
   \end{theorem}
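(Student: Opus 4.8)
The plan is to reduce the distributional statement \eqnref{swap_j_only} to an identity between densities and to exploit the fact that a knockoff sampling scheme is a fixed Markov kernel. Write $K(\bigxko\mid X)$ for the conditional law from which the knockoffs are drawn given $X$; crucially, this kernel is an intrinsic part of the sampling algorithm and does \emph{not} depend on the marginal law of $X$, so the same $K$ is used whether $X\sim P(X)$ or $X\sim Q(X)$. Fixing a null index $j\in\nulls$, the exchangeability property \eqnref{swap_j_onlyP} says that the joint density
\[
(x,\xko)\ \longmapsto\ P(x)\,K(\xko\mid x)
\]
is invariant under $\swap_j$, the map that exchanges the $j$th coordinates of $x$ and $\xko$ while leaving all other coordinates (in particular $x_{-j}$) fixed. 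The goal is to deduce the corresponding invariance for $(x,\xko)\mapsto Q(x)\,K(\xko\mid x)$, which is exactly \eqnref{swap_j_only}.

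The key step is a factorization. Writing $P(x)=P(x_j\mid x_{-j})\,P(x_{-j})$, the factor $P(x_{-j})$ is untouched by $\swap_j$ because $x_{-j}$ is fixed by the swap; dividing it out of the $P$-invariance identity leaves an identity involving only the conditional $P(x_j\mid x_{-j})$ and the kernel $K$. Because $j\in\nulls$ we may now invoke the hypothesis $P(x_j\mid x_{-j})=Q(x_j\mid x_{-j})$ to replace every occurrence of the $P$-conditional by the corresponding $Q$-conditional---here it is essential that the swap moves only the $j$th coordinate, so that after the substitution both sides are expressed through $Q(\,\cdot\mid x_{-j})$ evaluated at the two values $x_j$ and $\xko_j$. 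Multiplying the resulting identity back through by $Q(x_{-j})$ reconstitutes $Q(x)=Q(x_j\mid x_{-j})\,Q(x_{-j})$ on both sides and yields precisely the $\swap_j$-invariance of $Q(x)\,K(\xko\mid x)$. Running this for every $j\in\nulls$ gives \eqnref{swap_j_only}, and the last two sentences of the theorem then follow immediately by combining this with \eqnref{conditional_null} of \propref{null}.

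It is illuminating to phrase the same computation at the level of likelihood ratios: the joint law of $(X,\bigxko)$ under $Q$ is a reweighting of its law under $P$ with density $Q(X)/P(X)$ (the kernel cancels), and for a null coordinate the hypothesis collapses this ratio to $Q(X_{-j})/P(X_{-j})$, a function of $X_{-j}$ alone and hence itself $\swap_j$-invariant. Reweighting a $\swap_j$-invariant law by a $\swap_j$-invariant density preserves the invariance, which is the content of the theorem.

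I expect the only genuine obstacle to be the measure-theoretic bookkeeping of supports, exactly of the kind flagged in the footnote to \propref{null}. The cancellation of $P(x_{-j})$ and the passage from a $P$-almost-everywhere identity to a $Q$-almost-everywhere identity are valid only on the common support, so one must check that these manipulations hold on the support of the $Q$-joint law, using the relevant absolute continuity; for the case-control application this is supplied by the proviso that $P(Y)$ is absolutely continuous with respect to $Q(Y)$. The fully general (continuous) case replaces densities by Radon--Nikodym derivatives but is otherwise identical, just as in the proof of \propref{null}.
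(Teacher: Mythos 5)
Your proof is correct and follows essentially the same route as the paper's: both arguments fix the knockoff kernel $P(\bigxko\mid X)$, factor the joint law through $X_{-j}$ (which is unmoved by the swap), and substitute $Q(X_j\mid X_{-j})$ for $P(X_j\mid X_{-j})$ to transfer the $\swap_j$-invariance from $P$ to $Q$. Your likelihood-ratio reformulation and your explicit flagging of the support/absolute-continuity bookkeeping are welcome refinements of the same computation, not a different argument.
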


   This result allows considerable flexibility in the way we can
   construct knockoff variables since we can use lots of unlabeled
   data to estimate conditional distributions $X_j | X_{-j}$.
For example, by constructing our knockoffs from a data set consisting
of controls only, which does not match the population in a
case-control study, we are nonetheless using the correct conditionals
$X_j|  X_{-j}$ for every null variable $j$ and can be assured that
we are constructing valid knockoffs.

\begin{proof}[Proof of Theorem \ref{thm:main}]
  Once again, we prove the result in the case where all the variables
  are discrete. To prove our claim, we need to show the following:
  when $X \sim Q(X)$, the distribution of $X_j, \bigxko | X_{-j}$
  is symmetric in the variables $X_j$ and $\bigxko_{j}$.  This
  distribution is given by
\[
Q(X_j|X_{-j}) P(\bigxko | X) = P(X_j | X_{-j}) P(\bigxko
| X), 
\]
where $P(\tilde X | X)$ denote the conditional distribution of
$\bigxko | X$, and the equality holds since $Q(X_j|X_{-j}) =P(X_j|X_{-j})$ by assumption. 
 Our claim now follows from \eqnref{swap_j_onlyP},
the exchangeability of knockoffs and null variables under $P$,
which implies that the right-hand side is symmetric in $X_j$ and
$\bigxko_j$. Therefore, $X_j$ and $\bigxko_j$ are also exchangeable under $Q$, proving
the theorem.
\end{proof}

\section{Discussion}

Our main result shows that if we use the right conditionals
$X_j | X_{-j}$ for each null variable, then the model-X framework
applies and, ultimately, inference is valid---even when we construct
knockoffs with reference to a distribution with the wrong marginals
$P(X)$ and $P(Y)$.  Mathematically, this result can be deduced from
the arguments in \cite{barber2018robust}.  Our contribution here is to
link this phenomenon with the situation in case-control studies as
specialists have openly wondered about the validity of knockoffs
methods in such settings \cite{marchini2019comment}. Not only is the
approach valid but we can further leverage the shift in the burden of
knowledge, using the ample availability of unlabeled data to construct
valid knockoffs.

We have not discussed the question of power in this brief
paper. However, we pose an interesting question for further
investigation: now that we know that we can use either a population of
controls to construct knockoffs, or a population of cases, or a
population in which cases and controls are in an arbitrary proportion,
which population should we use as to maximize power? We hope to report
on this in a future paper.

\subsection*{Acknowledgements}
R.~F.~B.~was partially supported by the National Science Foundation
via grant DMS 1654076, and by an Alfred P.~Sloan fellowship.
E.~C.~was partially supported by the Office of Naval Research under
grant N00014-16-1-2712, by the National Science Foundation via DMS
1712800, by the Math + X Award from the Simons Foundation and by a
generous gift from TwoSigma. E.~C.~would like to thank Chiara Sabatti
and Eugene Katsevich for useful conversations related to this
project.

\bibliographystyle{plainnat}
\bibliography{bibliography}
\end{document}